\newtheorem{lemma}{Lemma}
\newtheorem{proposition}{Proposition}
\begin{document}

\title{Ultra Reliable Communication via Opportunistic ARQ Transmission in Cognitive Networks} 
\author{Mohammad Shehab, Hirley Alves, and Matti Latva-aho\\
	
	\IEEEauthorblockA{
		Centre for Wireless Communications (CWC), University of Oulu, Finland\\
	}
	Email: firstname.lastname@oulu.fi
}
%
\maketitle


\begin{abstract}
This paper presents a novel opportunistic spectrum sharing scheme that applies ARQ protocol to achieve ultra reliability in the finite blocklength regime. A primary user shares its licensed spectrum to a secondary user, where both communicate to the same base station. The base station applies ARQ with the secondary user, which possess a limited number of trials to transmit each packet. We resort to the interweave model in which the secondary user senses the primary user activity and accesses the channel with access probabilities which depend on the primary user arrival rate and the number of available trials. We characterize the secondary user access probabilities and transmit power in order to achieve target error constraints for both users. Furthermore, we analyze the primary user performance in terms of outage probability and delay. The results show that our proposed scheme outperforms the open loop and non-opportunistic scenarios in terms of secondary user transmit power saving and primary user reliability.  
\end{abstract}


\section{Introduction}\label{introduction}
Spectrum sharing has always been an imminent research topic through the current decade. Due to spectrum scarcity, researchers are studying the use of higher frequency bands e.g. millimeter-wave (mm-Wave) to uphold ultra broadband systems in fifth generation networks. Applying cognitive radio schemes represents a promising alternative and parallel solution at the same time \cite{sharingmm}. The design of such networks is expected to support newly introduced technologies such as machine to machine (M2M) communication promoting the Internet of Things (IoT). Quality of service (QoS) constraints are imposed to fulfill very low latency with expected reliability of higher than 99.9$\%$ \cite{paper1,NokiacMTC2016,latency}.  

In order to achieve \textit{Ultra-Reliable Low Latency Communication} (URLLC) in emerging technologies such as industrial automation \cite{latency}, machines communicate using short messages whenever data sizes are reasonably small such as sensor readings or alarm notifications, which is the case in the most of machine type communication (MTC) scenarios \cite{paper1,latency}. Consequently and as a result of the failure of Shannon's model to provide an accurate benchmark for it, finite blocklength communication has been extensively studied recently \cite {paper5,paper1,paper3,eucnc}. For instance, \cite{paper5} characterizes the throughput of delay constrained systems communicating on short packets, while \cite{paper3} defines the maximum achievable rate and throughput of ARQ protocols in the finite blocklength regime.

On the other hand, cognitive radios allow a primary user (PU) to share its licensed channel resources namely spectrum to unlicensed cognitive secondary user (SU). Previous works considered collision scenarios where the SU is allowed to transmit with constraints on the interference temperature affecting the PU \cite{CIT,bedewy,comletter,sharingmm,Makki}. For example, \cite{CIT} proposes a three-dimensional Markov chain model to analyze the SU performance in dynamic spectrum access with interference temperature constraints. In \cite{comletter}, the authors studied the capacity and optimal power allocation in collision scenarios with PU SINR guarantees. Fan et al. introduced a non-cooperative game to maximize the throughput of mm-Wave ultra-dense networks (UDNs) using dynamic spectrum sharing in \cite{sharingmm}. Makki et al. approached finite blocklength spectrum sharing via rate adaption in \cite{Makki} and suggested ARQ protocol as a potential extension to their work. However, non of these endeavors studied applying ARQ and successive interference cancellation (SIC) in the finite blocklength regime.

In this paper, we propose a novel opportunistic transmission framework for a SU in interweave model in the finite blocklength regime. In an interweave model, the SU senses the PU activity and decides whether to transmit or not according to some QoS (outage) guarantees \cite{bedewy}. If the PU is active, the SU accesses the channel with a certain access probability based on the PU and SU arrival rates as characterized in our work. The base station (BS) applies ARQ and SIC on the SU packet. In ARQ protocol, the SU is allowed to retransmit its packet if it receives a NACK feedback from the BS, which means the packet is not successfully decoded. The SU possesses $M$ trails to transmit a single packet, where $M$ depends on the SU arrival rate to retain the SU queue stability. Once the BS decodes the SU packet, it applies SIC to eliminate interference from the PU packet and subsequently, reduces the error outage probability for the PU. Furthermore, we analyze the PU expected delay, which occurs due to the SIC process and the SU retransmissions. The results show that this scheme provides ultra reliability for both the PU and SU while reducing the SU transmit power.

The rest of the paper is organized as follows: in  Section \ref {system model}, we characterize communication at finite blocklength and introduce the system model. Next, Sections \ref{section3} and \ref{section4} include derivations of the SU outage probability and transmit power. After that, we analyze the PU outage probability and delay in Section \ref{PU}. The performance of the proposed scheme is evaluated in Section \ref{results}. Finally, Section \ref{conclusion} concludes the paper.

\section{System layout} \label{system model}
Consider an uplink scenario where the PU and the SU convey short packets with a fixed rate $R$ bits per channel use (bpcu) to a common BS. For finite blocklength transmission, packets are conveyed with error probability $\epsilon \in\left[ 0,1\right]$ given by \cite{paper2,paper3}
\begin{align}\label{e1}
\epsilon(\sigma)=\mathbb{E}_\sigma\left[ Q\left(\frac{n \log_2(1+\sigma )-nR+0.5\log_2n}{\sqrt{nv(\sigma)}} \right)\right] , 
\end{align}  
\begin{align}\label{e2}
v(\sigma)=\left( 1-\frac{1}{\left( 1+\sigma\right) ^{2}}\right) \left( \log_2e\right)^2, 
\end{align} 
where $\sigma$ denotes the SINR, $n$ is the blocklength and $v(\sigma)$ is the channel dispersion.

Both PU and SU have a Bernoulli distributed arrival process with packet arrival probabilities of $\lambda_p$ and $\lambda_s$, respectively as in \cite{system_model}. This implies that although we know the average arrival probability of the PU, we can not predict exactly when the PU will have packets to transmit. Whenever a packet arrives to the PU, the PU transmits it only once and remains silent until the next packet arrival. On the contrary, the SU applies opportunistic transmission. The SU possesses $M$ trials to transmit a single packet where $m\in\left\lbrace 1,2, ... , M \right\rbrace$ denotes the $m^{th}$ trial. According to Loynes' theorem \cite{loynes}, a queue is stable if its average service rate is higher than the average arrival rate. To guarantee SU queue stability at steady state and boost reliability, M varies from one packet to the other such that $M=\lceil \frac{1}{\lambda_s} \rceil$ for $\alpha$ fraction of time and $M=\lfloor \frac{1}{\lambda_s} \rfloor$ for $1-\alpha$ fraction of time. $\alpha$ is selected to be slightly higher than or equal to $\alpha_o$ with
\begin{align}\label{e3}
\alpha_o=\mod\left(\frac{1}{\lambda_s},\lfloor \frac{1}{\lambda_s} \rfloor \right).
\end{align} 

On the other side, the BS applies the ARQ protocol with the SU where the BS calls the SU to retransmit its packet if an error occurs. Given that the SU's packet is not successfully decoded before the $m^{th}$ trial, the SU accesses the channel with probability 1 if the PU is silent and with access probability $q_m \in \left\lbrace q_1, q_2,..., q_M\right\rbrace$ if the PU is transmitting a packet. Here, we assume that the SU is able to sense the PU transmission perfectly as in \cite{CIT}. 

Let $p_p$ and $p_s$ be the transmit powers of the PU and SU, respectively, while $|h_p|$ and $|h_s|$ denote their channel coefficients, respectively as shown in Fig. \ref{system layout}. Both channels are i.i.d quasi-static Rayleigh fading with coherence time of $n$ symbol periods, and the channel coefficients are available at the BS. The noise entries are additive complex Gaussian of unit variance. Define the PU packet outage probability as $\epsilon_p$ and the SU packet outage probability after exhausting its $M$ trials as $\epsilon_s$. 
\begin{figure}[!t] 
	\centering
	\includegraphics[width=1\columnwidth]{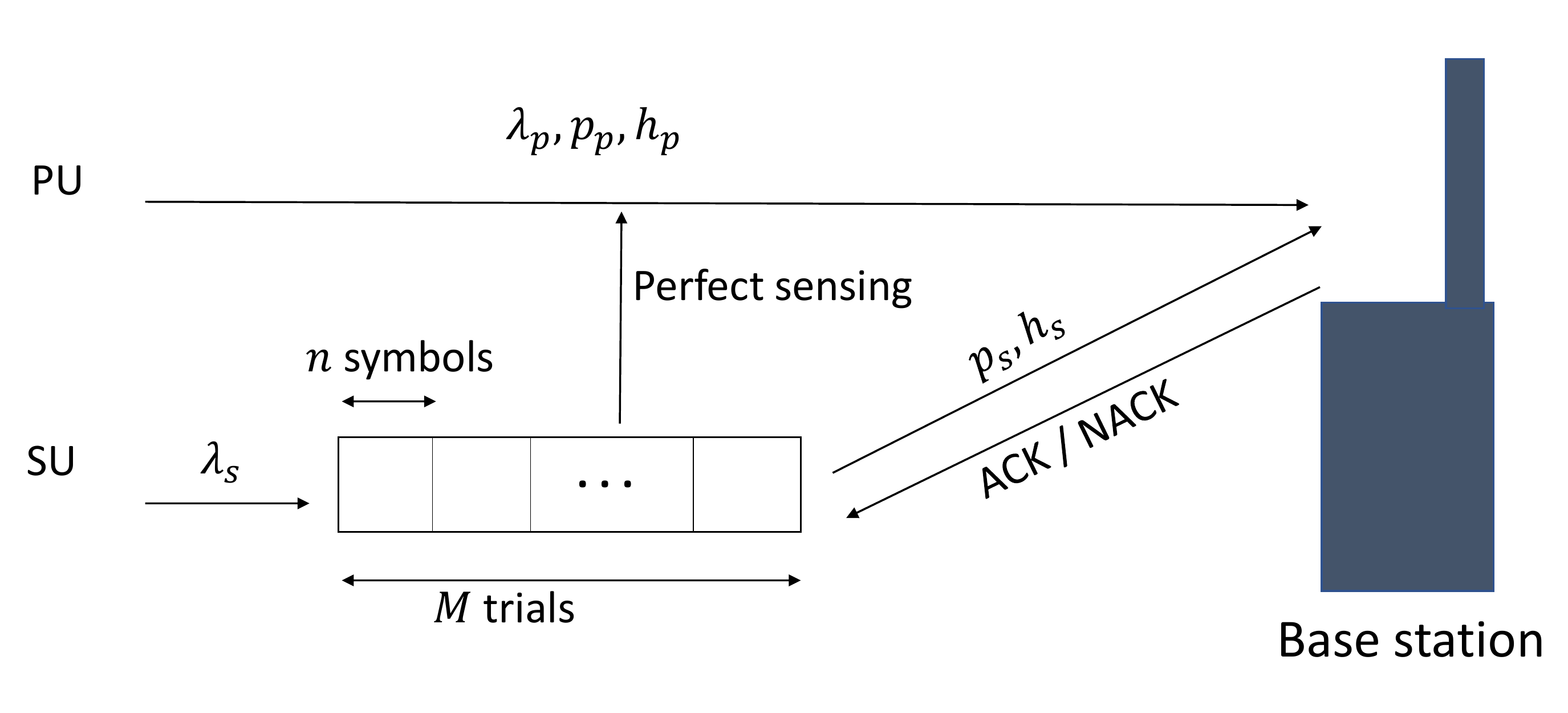}
	\vspace{-6mm}
	\caption{System layout.}
	\label{system layout}
	\vspace{-4mm}
\end{figure}
If both users transmit at the same time, the BS applies SIC on the PU received signal. The BS is able to decode the SU interfering packet successfully with probability $1-\epsilon_s$. In this case, the PU SINR is the same as if the PU transmits while the SU is silent, which is $\sigma_p=p_p|h_p|^2$. This renders an outage probability of $\epsilon_{p_1}=\epsilon(\sigma_p)$. Likewise, the BS fails to decode the SU packet with probability $\epsilon_s$, which renders a PU SINR of $\sigma_{p_i}=\frac{p_p|h_p|^2}{p_s|h_s|^2+1}$. This causes an outage probability of $\epsilon_{p_2}=\epsilon(\sigma_{p_i})$. It is straightforward to infer that $\epsilon_{p_2}\geq\epsilon_{p_1}$ due to the interference in the second case. Leveraging, in case of simultaneous transmission, the PU outage probability is 
\begin{align}\label{e5}
\epsilon_{p}=(1-\epsilon_s) \epsilon_{p_1}+\epsilon_s \epsilon_{p_2}. 
\end{align}  

At the $m^{th}$ trial, the SU has an outage probability of $\epsilon_{s_1}=\epsilon(\sigma_s)$ if the PU is silent and $\epsilon_{s_2}=\epsilon(\sigma_{s_i})$ in case of concurrent transmission, where $\sigma_{s}=p_s|h_s|^2$ and $\sigma_{s_i}=\frac{p_s|h_s|^2}{p_p|h_p|^2+1}$. The targets of our analysis are as follows:

\begin{enumerate}
	\item Compute the PU and SU outage probabilities.
	
	\item Optimize the SU access probabilities to minimize the SU transmit power subject to outage constraints.	
	
	\item Assess the performance of opportunistic ARQ transmission in terms of SU power saving and PU reliability. 
	
\end{enumerate}

\section{Secondary user Outage Probability} \label{section3}
We start by deriving the outage expression for the SU in closed form. At a certain instant, suppose that a packet arrives at the SU's buffer; the SU user starts to sense the channel for PU transmission and admits opportunistic transmission with ARQ as detailed in Section \ref{system model}. Define the probability of reaching the $m^{th}$ trial as $P_{SU} (m)$. The probability tree of success and failure at the $m^{th}$ is projected in Fig. \ref{tree}. 

\begin{lemma} \label{EEEV}
The SU's cumulative outage probability $\epsilon_s$ can be formulated as
\begin{align}\label{e19}
\epsilon_s=P_{SU}(M+1)= \prod_{i=1}^{M} \theta_i=\prod_{i=1}^{M} \beta q_i + T ,
\end{align}
where $\beta=\lambda_p (\epsilon_{s_2}-1)$, and $T=\lambda_p (1-\epsilon_{s_1})+\epsilon_{s_1}$.
\end{lemma}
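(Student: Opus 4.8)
The plan is to compute the one-step failure (carry-over) probability $\theta_i$ at trial $i$ from the probability tree of Fig.~\ref{tree}, and then use independence across trials to write the cumulative outage as a product.

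First I would read off the disjoint events that prevent a successful decoding at trial $i$, conditioned on having reached it. Proceeding to the next trial happens in three mutually exclusive ways: (i) the PU is silent, with probability $1-\lambda_p$, the SU transmits (access probability $1$), and decoding fails with probability $\epsilon_{s_1}$; (ii) the PU is active, with probability $\lambda_p$, the SU accesses with probability $q_i$, and decoding fails with probability $\epsilon_{s_2}$; (iii) the PU is active and the SU refrains from transmitting, with probability $\lambda_p(1-q_i)$, in which case no packet is sent and the trial is necessarily carried over. Adding these contributions yields
\begin{align*}
\theta_i = (1-\lambda_p)\epsilon_{s_1} + \lambda_p q_i \epsilon_{s_2} + \lambda_p(1-q_i).
\end{align*}

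Next I would separate the part linear in $q_i$ from the constant part. The coefficient of $q_i$ is $\lambda_p(\epsilon_{s_2}-1)=\beta$, while the remaining terms give $(1-\lambda_p)\epsilon_{s_1}+\lambda_p=\epsilon_{s_1}+\lambda_p(1-\epsilon_{s_1})=T$, so that $\theta_i=\beta q_i+T$, exactly the stated per-trial form.

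Finally, since the PU arrivals are i.i.d.\ Bernoulli, the quasi-static fading is i.i.d.\ across trials, and the access coin flips are independent, the failure events at distinct trials are independent. Therefore the probability of exhausting all $M$ trials without success, i.e.\ the probability $P_{SU}(M+1)$ of reaching the fictitious $(M+1)$-th trial, factors as $\epsilon_s=\prod_{i=1}^{M}\theta_i=\prod_{i=1}^{M}(\beta q_i+T)$, which is the claim. The step that most needs care is the bookkeeping in case (iii): a no-access slot must be counted as a deterministic carry-over that still consumes a trial, and getting this contribution right is precisely what produces the $\lambda_p(1-q_i)$ term and hence the constant $T$. The independence across trials is the second ingredient, since it is what converts the chained conditional probabilities of the tree into the clean product.
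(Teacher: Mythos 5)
Your proof is correct and follows essentially the same route as the paper: you derive the per-trial carry-over probability $\theta_i=\lambda_p q_i\epsilon_{s_2}+\lambda_p(1-q_i)+(1-\lambda_p)\epsilon_{s_1}=\beta q_i+T$ from the probability tree (the paper does this via $1-P_{SU}(\mathrm{suc}\mid m)$, you by directly summing the three disjoint failure events, which is equivalent), and then chain the trials into the product $\epsilon_s=P_{SU}(M+1)=\prod_{i=1}^{M}\theta_i$. Your explicit attention to the no-access carry-over term $\lambda_p(1-q_i)$ and to the independence across trials makes tacit steps of the paper's argument explicit, but the substance is identical.
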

\begin{figure}[!t] 
	\centering
	\includegraphics[width=0.7\columnwidth]{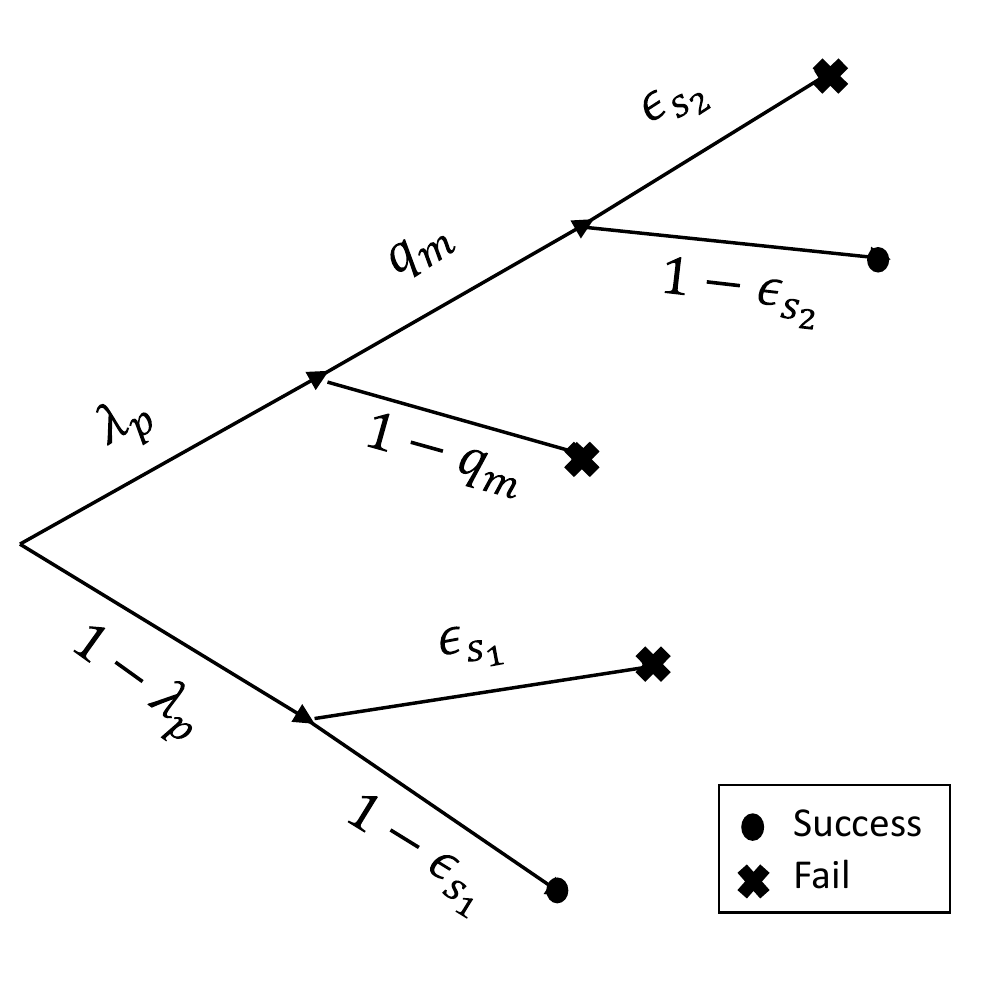}
	\vspace{-4mm}
	\caption{Probability tree of the $m^{th} $ trial.}
	\label{tree}
	\vspace{-4mm}
\end{figure}
\begin{proof} 
At the $m^{th}$ trial and according to Fig. \ref{tree}, we define the following probabilities; the probability that the SU transmits at the $m^{th}$ trial is given by
\begin{align}\label{e11}
P_{SU}(\mathrm{Tx} \ | \ m)=\lambda_p q_m+(1-\lambda_p)=1-\lambda_p(1-q_m),
\end{align} 
while the probability that it remains silent is
\begin{align}\label{e12}
P_{SU}(\mathrm{no \ Tx} \ | \ m)=\lambda_p(1-q_m).
\end{align} 
Thus, the probability of success at the $m^{th}$ trial is
\begin{align}\label{e13}
P_{SU}(\mathrm{suc} \ | \ m )= \lambda_p q_m (1-\epsilon_{s_2}) + (1-\lambda_p) (1-\epsilon_{s_1}).
\end{align} 
For the $m^{th}$ time slot, we obtain the fail probability as
\begin{align}\label{e14}
P_{SU}(\mathrm{fail \ | \ m})&= 1-P_{SU}(\mathrm{suc} \ | \ m ) \notag \\
=\lambda_p q_m &\epsilon_{s_2}+\lambda_p (1-q_m)+(1-\lambda_p)\epsilon_{s_1},
\end{align}
which for $m=1$ corresponds to the probability of reaching the second trial. Similarly, we obtain the probability of reaching the third trial as 
\begin{flalign}\label{e16}
P_{SU}(3)=&P_{SU}(\mathrm{fail \ | \ 1}) \cdot P_{SU}(\mathrm{fail \ | \ 2}) \notag \\
=&\left[ \lambda_p q_1 \epsilon_{s_2}+\lambda_p (1-q_1)+(1-\lambda_p)\epsilon_{s_1}\right] \cdot \notag  \\
& \left[\lambda_p q_2 \epsilon_{s_2}+\lambda_p (1-q_2)+(1-\lambda_p)\epsilon_{s_1}\right].
\end{flalign}
Following the above pattern, we attain the probability of reaching the $m^{th}$ trial as
\begin{align}\label{e18}
P_{SU}(m)&=\prod_{i=1}^{m-1}\lambda_p q_i \epsilon_{s_2}+\lambda_p (1-q_i)+(1-\lambda_p)\epsilon_{s_1} \notag \\
&=\prod_{i=1}^{m-1} \beta q_i + T.
\end{align}
The SU's cumulative outage probability $\epsilon_s$ can be projected as the virtual probability of reaching the $(M+1)^{th}$ trial after the $M^{th}$ trial fails. That is, $\epsilon_s=P_{SU}(M+1)$ which according to (\ref{e18}) leads to (\ref{e19}).
\end{proof}
\begin{proposition} \label{p1}
The SU outage probability is lower bounded by $\epsilon_{s_l}=\left(\beta+T\right)^M$, and upper bounded by almost surely $ \epsilon_{s_u}\stackrel{a.s.}{=}T^M$.
\end{proposition}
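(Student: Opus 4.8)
The plan is to prove both bounds by an extremal argument applied to the product formula $\epsilon_s=\prod_{i=1}^{M}\theta_i$ with $\theta_i=\beta q_i+T$ established in Lemma \ref{EEEV}. The first thing I would do is fix the sign of $\beta$ and pin down the range of each factor. Since $\epsilon_{s_2}\le 1$ and $\lambda_p\ge 0$, we have $\beta=\lambda_p(\epsilon_{s_2}-1)\le 0$, so each $\theta_i=\beta q_i+T$ is non-increasing in the access probability $q_i$. Over the admissible range $q_i\in[0,1]$ this immediately yields the per-factor envelope $\beta+T\le\theta_i\le T$, with the upper extreme attained at $q_i=0$ (the SU defers whenever the PU is active) and the lower extreme at $q_i=1$ (the SU always transmits).

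Next I would lift these per-factor bounds to the product. The crucial observation is that every $\theta_i$ is a probability, namely $P_{SU}(\mathrm{fail}\mid i)$ from (\ref{e14}), so $\theta_i\ge 0$; moreover both endpoints are themselves probabilities, giving $T\ge 0$ and $\beta+T=\lambda_p(\epsilon_{s_2}-\epsilon_{s_1})+\epsilon_{s_1}\ge 0$ (also seen directly from $\epsilon_{s_2}\ge\epsilon_{s_1}$, the SU analogue of $\epsilon_{p_2}\ge\epsilon_{p_1}$). With all quantities nonnegative I can invoke the elementary fact that $0\le a_i\le b_i$ implies $\prod a_i\le\prod b_i$. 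Applying this with the constant lower envelope $a_i=\beta+T$ and then the constant upper envelope $b_i=T$ yields $(\beta+T)^M\le\prod_{i=1}^{M}\theta_i\le T^M$, which is exactly $\epsilon_{s_l}\le\epsilon_s\le\epsilon_{s_u}$.

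The step I expect to be most delicate is not the inequality chain but interpreting the ``almost surely'' decoration on the upper bound. The value $T^M$ is the outage attained identically when $q_i=0$ for all $i$, i.e. the fully conservative, non-opportunistic policy; I would argue it is the worst case because raising any $q_i$ strictly decreases $\theta_i$ (as $\beta<0$) and hence strictly decreases the product. I expect the ``a.s.'' to encode that $\theta_i\le T$ holds as a realization-independent envelope, so $T^M$ is a sure upper bound that degenerates to an equality precisely in the limit $q_i\to 0$, and I would make this explicit. Care is needed here because the whole argument hinges on the nonnegativity of every factor: if any $\theta_i$ could turn negative, the product inequalities would reverse, so confirming $\theta_i\ge 0$ is the linchpin I would verify first.
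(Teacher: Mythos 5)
Your proposal is correct and takes essentially the same route as the paper, which simply substitutes the extremal policies $q_m=1$ and $q_m=0$ into the product formula $\epsilon_s=\prod_{i=1}^{M}(\beta q_i+T)$ from Lemma \ref{EEEV} to read off $\epsilon_{s_l}=(\beta+T)^M$ and $\epsilon_{s_u}=T^M$. The only difference is that you explicitly justify what the paper asserts without argument---that $\beta\le 0$ makes each factor $\theta_i$ non-increasing in $q_i$ and that every factor lies in $[\beta+T,\,T]\subseteq[0,1]$, so the per-factor envelope lifts to the product---which fills in the implicit monotonicity step rather than constituting a different approach.
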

\begin{proof}
The lowest possible SU outage probability $\epsilon_{s_l}$ occurs when the SU always transmits regardless of the state of the PU. Substituting \{$q_m=1, \  \forall m$\} in (\ref{e19}), we obtain
\begin{align}\label{e20}
\epsilon_{s_l}\!=\prod_{i=1}^{M} \beta + T=\left(\beta+T\right) ^M
\end{align}
Likewise, the worst case outage occurs when the SU never transmits whenever the PU is active; that is \{$q_m=0, \  \forall m$\} in (\ref{e19}), which yields 
\begin{align}\label{e21}
\epsilon_{s_u}\stackrel{a.s.}{=}\prod_{i=1}^{M} T =T^M.
\end{align}
\end{proof}
\vspace{-2mm}
Here, $\epsilon_{s_l}$ and $\epsilon_{s_u}$ are the lower and upper bounds, respectively for the SU outage probability and $\epsilon_{s}$ lies in the interval $\left[ \epsilon_{s_l},\epsilon_{s_u} \right]$, where the value of $\epsilon_{s}$ is determined according to the access probabilities $\left\lbrace q_1, q_2,..., q_M\right\rbrace$.
\vspace{-1mm}

\section{Secondary user transmit power} \label{section4}
In this section, we derive the SU transmit power required to achieve a target SU outage probability $\epsilon_{st}$ in the open loop, non-opportunistic, and opportunistic schemes. First, we start by the open loop scenario.
\subsection{Open Loop Scenario}
To achieve a transmission rate $R$ at a target SU error probability $\epsilon_{s_t}$ with single transmission (open loop), the amount of allocated power $p_s^*(\sigma)$ according to (\ref{e1}) is the root of
\begin{align}\label{e22}
\mathbb{E}_\sigma\left[Q\left(\frac{n \log_2(1+\sigma )-nR+0.5\log_2n}{\sqrt{nv(\sigma)}} \right)\right] =\epsilon_{s_t}, 
\end{align}
which can be obtained using Matlab root-finding functions, e.g., fzero or plotting when setting $\sigma=\sigma_{s}=p_s^*|h_s|^2$ in a similar way to \cite{comletter}. Taking into consideration the interference that that occurs when the PU has a packet arrival with probability $\lambda_p$, the SU SINR becomes $\sigma=\sigma_{s_i}=\frac{p_s^*|h_s|^2}{p_p|h_p|^2+1}$. Thus, the open loop SU transmission power is
\begin{align}\label{e22'}
p_{s_{ol}}=\lambda_p p_s^*(\sigma_{s_i})+(1-\lambda_p)p_s^*(\sigma_{s}), 
\end{align}
where $p_s^*(\sigma_{s})$ and $p_s^*(\sigma_{s_i})$ are obtained by solving (\ref{e22}).
\subsection{Non-opportunistic Scenario}
In non-opportunistic transmission, the SU keeps retransmitting its message with probability 1 till it is successfully decoded regardless of the state of the PU. In this case the SU needs $M_{no}$ trials to convey its packet, which can be obtained from the lower bound equation in (\ref{e20}) as
\begin{align}\label{e22''}
M_{no}=\frac{\log \epsilon_{s_t} }{\beta+T}=\frac{\log \epsilon_{s_t}}{\log\left[  (1-\lambda_p)\epsilon_{s_1}+\lambda_p\epsilon_{s_2}\right] }, 
\end{align} 
Thus, the transmit power is given by
\begin{align}\label{e23}
p_{s_{no}}=\frac{\log \epsilon_{s_t}}{\log\left[  (1-\lambda_p)\epsilon_{s_1}+\lambda_p\epsilon_{s_2}\right] }p_s. 
\end{align} 

\subsection{Opportunistic Scenario}
Back to the transmission process, exploiting (\ref{e11}) and (\ref{e18}), we can say that a transmission occurs at the $m^{th}$ trial with probability \vspace{-1.5mm}
\begin{align}\label{e24}
P_{SU}(\mathrm{m^{th} \ Tx})&=P_{SU}(\mathrm{Tx} \ | \ m)\cdot P_{SU}(m)   \notag \\ 
&= \left[ \lambda_p q_m+(1-\lambda_p) \right]\prod_{i=1}^{m-1} \theta_i.
\end{align} \vspace{-0.25mm}
Thus, the SU transmit power needed to deliver one packet $p_{s_{op}}$ can be formulated as the sum of transmission probabilities $\phi$ times the power per transmission $p_s$. That is \vspace{-1.5mm}
\begin{align}\label{e25}
p_{s_{op}}=\phi \cdot  p_s, 
\end{align} \vspace{-2mm}
with \vspace{-2mm}
\begin{align}\label{e26}
\phi=\underbrace{\lambda_p q_1+(1-\lambda_p)}_{\text{$1^{\mathrm{st}}$ Tx}} + \sum_{k=2}^{M}\left[ \lambda_p q_k+(1-\lambda_p) \right] \prod_{i=1}^{k-1} \theta_i.
\end{align} \vspace{-0.5mm}
Since $p_s$ is constant in (\ref{e26}), it is clear that the power consumed per packet transmission solely depends on the objective function $\phi$. 

Reversing the problem, we aim at computing the necessary access probabilities $q_1, q_2, ... q_M$ to satisfy a certain target SU outage probability $\epsilon_s=\epsilon_{s_{t}}$ such that $\epsilon_{s_{t}}\geq \epsilon_{s_l}$. A feasible solution can be obtained by setting $q=q_1= q_2= ...= q_M$, which implies equal access probabilities for all trials. From (\ref{e19}), we have \vspace{-5mm}
\begin{align}\label{e28}
\epsilon_s=\prod_{i=1}^{M} \theta_i=\prod_{i=1}^{M} \theta=\theta^M=\epsilon_{s_{t}},
\end{align}
where $\theta=\beta q+T$. The solution of (\ref{e28}) gives 
\begin{align}\label{e29}
q=\frac{\epsilon_{s_{t}}^{\frac{1}{M}}-T}{\beta}.
\end{align}
Indeed (\ref{e29}) represents a feasible solution that satisfies the outage constraint. The consumed power per packet can be obtained by substituting the resultant $q$ in (\ref{e26}), then (\ref{e25}). However, this solution seems to be sub-optimal since there is no evidence that it minimizes the objective function $\phi$ and hence, the consumed power per packet in (\ref{e25}).

To obtain the optimum access probabilities, we cast the following optimization problem to minimize the objective function $\phi$ with both PU and SU outage constraints \vspace{-1mm}
\begin{align}\label{op1}
\min_{q_1,..,q_M} \ &\phi \ ,  \\ 
s.t \  \ \ &\epsilon_s \leq \epsilon_{s_{t}} \notag \\
&\epsilon_p \leq \epsilon_{p_{t}} \notag \\ 
&0\preceq\left\lbrace q_1,..,q_M\right\rbrace  \preceq 1. \notag
\end{align}
Taking a closer look on the above problem, we recognize that neither $\phi$ nor $\epsilon_s$ has a positive definite Hessian matrix an hence, neither is convex. Consequently, this problem can not be solved using Karush-Kuhn-Tucker (KKT) conditions \cite{Boyd}. Nevertheless, this problem can be easily solved by a numerical search with the same approach implemented in \cite{bedewy}, since the access probabilities are bounded between 0 and 1. Moreover, the numerical search can be further simplified bearing that the optimum solution must be such that $q_1\leq .. \leq q_M$, which can be proven, and the mathematical proof is omitted due to lack of space. 

\vspace{-0.8mm}
\section{Primary user Performance analysis} \label{PU}
Herein, we analyze the PU performance when the SU packet inter-arrival time is constant, which is the worst case scenario for the PU as the SU always has a new packet to transmit once it exhausts its $M$ trials for the previous packet. \vspace{-1mm}
\subsection{Primary User Outage Probability}
\begin{proposition}
The minimization of the objective function $\phi$ not only minimizes the SU transmit power, but also minimizes the PU overall outage probability $\epsilon_p$.
\end{proposition}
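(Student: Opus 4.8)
The plan is to show that the access profile $q_1,\dots,q_M$ that minimizes the objective $\phi$ (which is proportional to the SU transmit power through (\ref{e25})) simultaneously minimizes the PU outage $\epsilon_p$. I would first write the overall PU outage by averaging the simultaneous-transmission value (\ref{e5}) over whether the SU actually transmits while the PU is active, and cast it in the separated form $\epsilon_p=\epsilon_{p_1}+(\epsilon_{p_2}-\epsilon_{p_1})\,\Xi$, where $\epsilon_{p_1}$ is the interference-free baseline and $\Xi\ge 0$ collects the concurrent SU transmissions that SIC fails to cancel (so $\Xi=0$ exactly when the SU never transmits during PU activity). The excerpt already supplies the decisive sign $\epsilon_{p_2}-\epsilon_{p_1}\ge 0$, so minimizing $\epsilon_p$ reduces to minimizing $\Xi$; and $\Xi$ can only grow as the SU accesses the channel more aggressively during PU activity, i.e. as the $q_k$ increase. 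This is precisely the behaviour $\phi$ penalizes, which is what couples the two objectives.

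Second, I would establish rigorously that $\phi$ is coordinate-wise non-decreasing in the access probabilities. Writing $S_j=\prod_{i=1}^{j-1}\theta_i=P_{SU}(j)$ for the reach probability and differentiating (\ref{e26}), the $q_j$-dependence buried in every later factor $\theta_j$ collapses into
\begin{align}\label{dphi}
\frac{\partial \phi}{\partial q_j}=S_j\big(\lambda_p+\beta\,G_j\big),\quad G_j=\sum_{k=j+1}^{M}\big[\lambda_p q_k+(1-\lambda_p)\big]\prod_{i=j+1}^{k-1}\theta_i,
\end{align}
where $G_j$ is the expected number of transmissions from trial $j+1$ onward. The key estimate is $G_j\le 1/(1-\epsilon_{s_2})=\lambda_p/|\beta|$: since $\epsilon_{s_2}\ge\epsilon_{s_1}$ (the SU analogue of the stated $\epsilon_{p_2}\ge\epsilon_{p_1}$, as $\sigma_{s_i}\le\sigma_s$) one has $\lambda_p q_k+(1-\lambda_p)\le (1-\theta_k)/(1-\epsilon_{s_2})$, and the telescoping identity $\sum_{k>j}(1-\theta_k)\prod_{i=j+1}^{k-1}\theta_i=1-\prod_{i>j}\theta_i\le 1$ closes the bound. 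Because $\beta<0$, this yields $\lambda_p+\beta G_j\ge 0$, hence $\partial\phi/\partial q_j\ge 0$.

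Third, I would combine the two facts through the binding constraint. Since $\epsilon_s=\prod_i\theta_i$ in (\ref{e19}) is non-increasing in every $q_j$, the reliability region $\{\epsilon_s\le\epsilon_{s_t}\}$ is upward closed, so the minimizer of the non-decreasing objective $\phi$ is pinned to the boundary $\epsilon_s=\epsilon_{s_t}$. On that boundary $\epsilon_s$ is fixed, so by (\ref{e5}) the only remaining freedom in $\epsilon_p$ is the frequency of concurrent transmission, and $\Xi$ collapses to this concurrency; the $\phi$-minimizing profile is then the one that accesses the channel during PU activity least often, which is exactly the profile minimizing $\Xi$ and therefore $\epsilon_p$.

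The main obstacle is disentangling the two opposing effects of more aggressive access. Raising a $q_k$ spends more instantaneous power and inflicts more direct interference on the PU, but it also improves SU decoding, which both aids SIC and shortens the retransmission horizon, tending to lower $\epsilon_s$ and the future-transmission count $G_j$. The proposition stands or falls on showing the direct cost always dominates, so that neither $\phi$ nor $\Xi$ can ever be reduced by transmitting more. For $\phi$ this tension is resolved quantitatively by the bound $G_j\le 1/(1-\epsilon_{s_2})$; the analogous control of the concurrency term $\Xi$ on the constraint surface is the delicate remaining step, and, like the $\phi$ bound, it ultimately rests on the structural inequalities $\epsilon_{s_2}\ge\epsilon_{s_1}$ and $\epsilon_{p_2}\ge\epsilon_{p_1}$ encoding that interference never helps.
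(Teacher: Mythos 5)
Your proposal is correct, and its decisive step is the same as the paper's, but you surround it with genuinely different (and stronger) supporting machinery. The paper's entire proof is three lines: it asserts $P(\mathrm{Simultaneous \ Tx})=\phi/M$ in (\ref{e30}), writes the affine form (\ref{e31}) --- which is precisely your decomposition $\epsilon_p=\epsilon_{p_1}+(\epsilon_{p_2}-\epsilon_{p_1})\,\Xi$ with $\Xi=\epsilon_s\phi/M$ --- and computes $\frac{\partial \epsilon_p}{\partial \phi}=\frac{\epsilon_s}{M}(\epsilon_{p_2}-\epsilon_{p_1})>0$, implicitly holding $\epsilon_s$ fixed. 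What you add, and the paper omits, is exactly what makes that implicit step legitimate: since $\epsilon_s$ and $\phi$ are both functions of the same variables $q_1,\dots,q_M$, differentiating $\epsilon_p$ in $\phi$ at constant $\epsilon_s$ only makes sense once the optimizer is pinned to the binding surface $\epsilon_s=\epsilon_{s_t}$, which you establish via the coordinate-wise monotonicity of $\phi$ (your derivative $\frac{\partial\phi}{\partial q_j}=S_j(\lambda_p+\beta G_j)$ and the bound $G_j\le 1/(1-\epsilon_{s_2})=\lambda_p/|\beta|$ check out: $1-\theta_k=(1-\lambda_p)(1-\epsilon_{s_1})+\lambda_p(1-\epsilon_{s_2})q_k$ reduces the per-term estimate to $\epsilon_{s_1}\le\epsilon_{s_2}$, and the telescoping sum gives $1-\prod_{i>j}\theta_i\le 1$) together with the monotone decrease of $\epsilon_s=\prod_i\theta_i$ in each $q_j$. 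This monotonicity lemma appears nowhere in the paper and is a nontrivial contribution, resolving exactly the tension you identify (more access means more collisions but faster termination). One correction to your own assessment: the "delicate remaining step" you flag in the last paragraph does not exist --- once (\ref{e30}) is granted, on the binding surface $\Xi=\epsilon_{s_t}\phi/M$ is literally a positive constant times $\phi$, so minimizing $\phi$ minimizes $\Xi$ with nothing left to control; the only ingredient neither you nor the paper proves is the modelling identity (\ref{e30}) itself (that the concurrency probability equals $\phi/M$, via the uniform-arrival argument), which the paper simply asserts and your argument inherits.
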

\begin{proof}
First, we characterize the PU outage probability. Considering that (\ref{e26}) represents the SU transmission probabilities and that a PU packet can arrive at any SU trial with probability $\frac{1}{M}$, we can infer the probability of simultaneous transmission in terms of $\phi$ as \vspace{-1.5mm}
\begin{align}\label{e30}
P(\mathrm{Simultaneous \ Tx})= \frac{\phi}{M}.
\end{align}
From (\ref{e5}), the PU outage probability can be written as
\begin{align}\label{e31}
\epsilon_p=\frac{\phi}{M}\left[ (1-\epsilon_s) \epsilon_{p_1}+\epsilon_s \epsilon_{p_2}\right] + \left(1- \frac{\phi}{M}\right) \epsilon_{p_1},
\end{align}
and the first derivative of $\epsilon_p$ w.r.t $\phi$ is 
\begin{align}\label{e32}
\frac{\partial \epsilon_p}{\partial \phi}\!=\frac{1}{M}\left( \epsilon_s \epsilon_{p_{2}}\!+\epsilon_{p_{1}}\!- \epsilon_s \epsilon_{p_{1}}\! -\epsilon_{p_{1}}\right) =\!\frac{\epsilon_s}{M}\left(\epsilon_{p_{2}}\!-\epsilon_{p_{1}} \right), 
\end{align}
which is strictly positive since $\epsilon_{p_{2}} > \epsilon_{p_{1}}$.
\end{proof}

\subsection{Primary User Delay}
When decoding the PU packet, an extra delay occurs when the PU suffers from interference from the SU. This is because the decoder must wait for the SU packet to be decoded first to perform interference cancellation which may take up to $M$ time slots; then it can decode the PU packet without interference. To analyze the delay $\delta$, first we assume that once the SU packet has been received with no error, the BS requires a virtual zero processing time to decode the SU packet and perform SIC. Thus, the only delay occurs due to the decoder waiting time for the SU retransmissions. A PU packet can arrive at any SU trial with probability $\frac{1}{M}$. Therefore, at the $m^{th}$ time slot, PU suffers a virtual zero delay in the following cases:
\begin{enumerate}
\item The SU is not present (did not reach the $m^{th}$ trial).
\item It is present and decides not to transmit or transmits successfully.
\item The SU is at its $M^{th}$ trial. 
\end{enumerate}
Thus, the virtual zero latency probability can be formulated by (\ref{e33}) on the the top of the next page and the above 3 cases are highlighted below each term. Following a similar procedure, we attain the probability of integer ($l:0<l<M$) packet duration latency as (\ref{e34}). The PU expected delay is bounded by \vspace{-2 mm}
	\begin{align}\label{e38}
	\begin{split} 
	\mathbb{E}_l\left[\delta \right]=\sum_{l=0}^{M-1}l.P(\delta=l) .
	\vspace{-2 mm}
	\end{split}
	\end{align}
\begin{figure*}[!t]
	\begin{align}\label{e33}
	\begin{split} 
	P(\delta=0)=\frac{1}{M} \left[ \sum_{m=1}^{M-1}\left[ \left( 1-\prod_{i=0}^{m-1} \theta_i\right)_{\text{1}} +\left(\prod_{i=0}^{m-1} \theta_i \right)_{\text{2}}  \left(q_m(1-\epsilon_{s_2})+(1-q_m) \right) \right]+1_3 \right].
	\end{split}
	\end{align}
    \hrule

	\begin{align}\label{e34}
	\begin{split} 
	P(\delta=l)=\frac{1}{M}\left[ \sum_{m=1}^{M-\delta-1}\left[ q_m\epsilon_{s_2}\left( \lambda_p q_{m+l} (1-\epsilon_{s_2}) + (1-\lambda_p)  (1-\epsilon_{s_1})\right)\prod_{i=1,i\neq m}^{m+l-1} \theta_i \right]  +q_{M-\delta}\epsilon_{s_2}\prod_{i=1,i\neq M-\delta}^{M-\delta+l-1} \theta_i \right]. 
	\end{split}
	\end{align}
	\hrule
	\vspace{-1.5mm}
\end{figure*}

\section{Performance Evaluation} \label{results}
\begin{figure}[!t] 
	\centering
	\includegraphics[width=0.94\columnwidth]{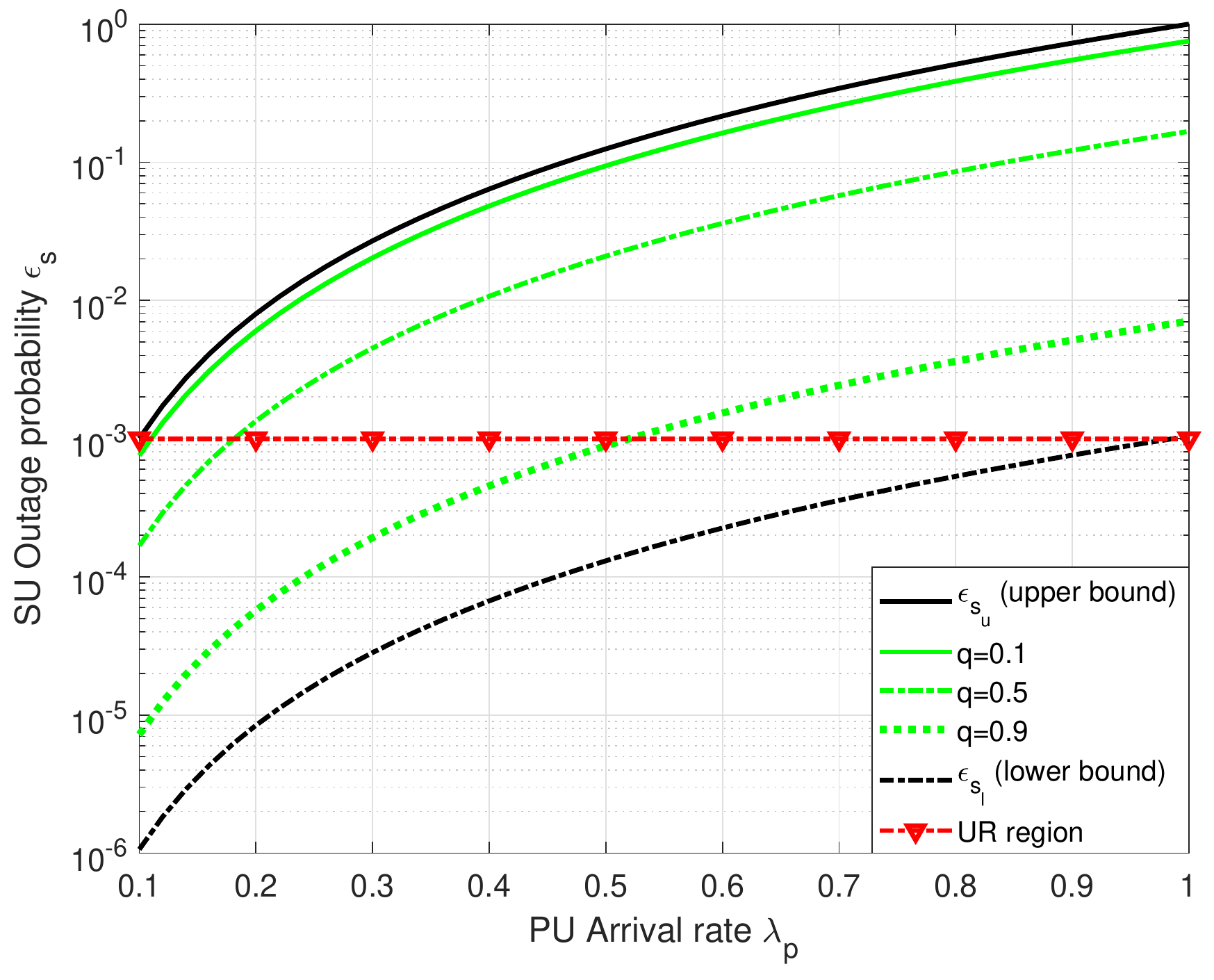}
	\vspace{-2mm}
	\caption{SU outage probability $\epsilon_s$ vs PU arrival rate $\lambda_p$ for $n=500$, $R=0.25$, $M=3$, $p_p=30$ dB, $p_s=32$ dB.}
	\label{su_outage}
	\vspace{-2mm}
\end{figure}

This section includes different plots to elucidate the performance of ARQ opportunistic transmission with SIC in terms of SU outage, SU power allocation, and PU reliability. The following parameters are fixed for all plots: $R=0.25$ bpcu, $p_p=30 $ dB, $p_s=32 $ dB, $n=500$, and $\lambda_s=\frac{1}{3}$ ($M=3$). In Fig. \ref{su_outage}, we plot the SU outage probability $\epsilon_s$ as a function of the PU arrival rate $\lambda_p$ according to (\ref{e19}). In this simulation, the SU access probabilities $q_m$ are constant for all trials as detailed in the figure. The plot shows the SU outage bounds obtained from Proposition \ref{p1}. It is clear that the SU outage probability increases when the PU has higher arrival rate as well as when the SU access probability $q$ is low. Thus, to achieve ultra-reliability for the SU when the PU has high arrival rates, it is essential to rise the SU access probabilities.

In Fig. \ref{Power allocation}, we compare the SU power allocation in opportunistic transmission to non-opportunistic transmission and open loop (one shot) scenarios. The target outage constraints are set as $\epsilon_{s_t}=\epsilon_{p_t}=10^{-3}$ for the next 3 figures. Fig. \ref{Power allocation} shows the power allocation necessary to achieve the target SU outage probability in each scenario for different PU arrival rates $\lambda_p$. The figure depicts that the opportunistic scheme with equal or optimum power allocation requires significantly less power when compared to non-opportunistic transmission, while the open loop  setup is the worst case. The amount of power saving increases when the PU queue becomes more congested, where it reaches more than 3 dB (half power) for $\lambda_p=0.9$ with respect to non-opportunistic transmission and up to 20 dB with reference to the open loop scenario. It is also obvious that more power is consumed when the PU arrival rate $\lambda_p$ increases. We also notice that the power gap between the optimum and equal power allocation is very tight and hence, the equal power allocation strategy highly approaches power optimality with low complexity.

Another observation worth mentioning is that although the transmission rate is low ($R=0.25$) bpcu, we had to transmit with high SNR to achieve ultra-reliability as envisaged by \cite{Dosti}. Bearing in mind that transmit power is normalized to a unit power of noise, transmission with 32 dB in case of opportunistic transmission is considered to be practically plausible specially when compared to 52 dB in the single transmission scheme or 35 dB in the non-opportunistic scenario. The fact that the SU has more chances to transmit when the PU is silent leads to the minimization of the interference for both users. This in turn reduces the power needed by the SU to transmit its packet and the sacrifice of reliability for the PU when the SU is present.

\begin{figure}[!t] 
	\centering
	\includegraphics[width=0.93\columnwidth]{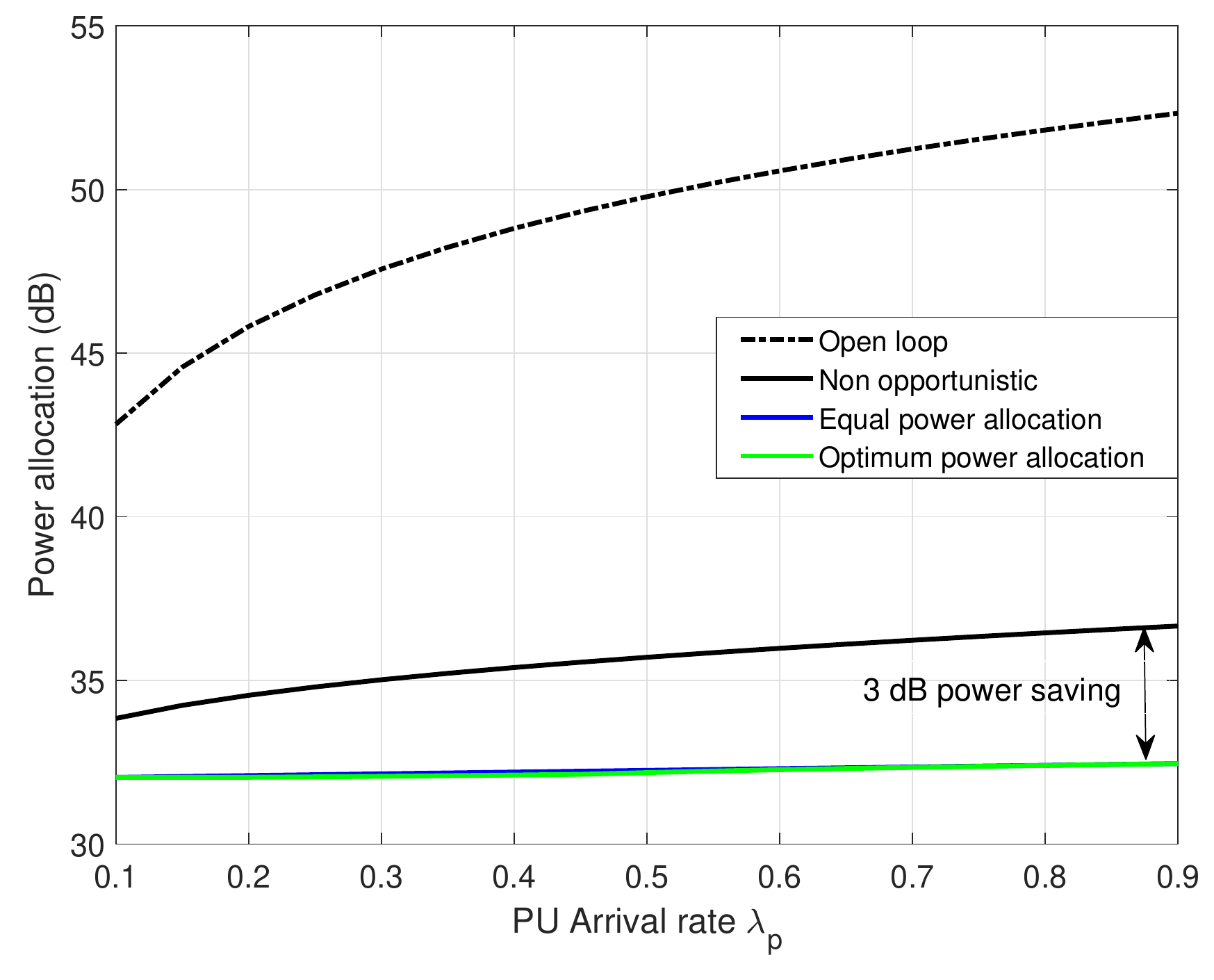}
	\vspace{-2mm}
	\caption{Power allocation vs PU arrival rate $\lambda_p$ for $\epsilon_{s_t}=\epsilon_{p_t}=10^{-3}$, $R=0.25$, $p_p=30 $ dB, $p_s=32 $ dB, $n=500$, and $M=3$.}
	\label{Power allocation}
	\vspace{-2mm}
\end{figure}

Fig. \ref{ep} depicts the PU error outage probability with and without the presence of SU for different PU arrival rates. The figure shows that the PU outage probability rises from $1.5 \times 10^{-4}$ to $2 \times 10^{-4}$. It is obvious that applying the ARQ opportunistic scheme guarantees a very limited declination in the PU reliability in terms of error while serving the SU. Thus, we attain URC for both PU and SU at the same time.
\begin{figure}[!t] 
	\centering
	\includegraphics[width=0.93\columnwidth]{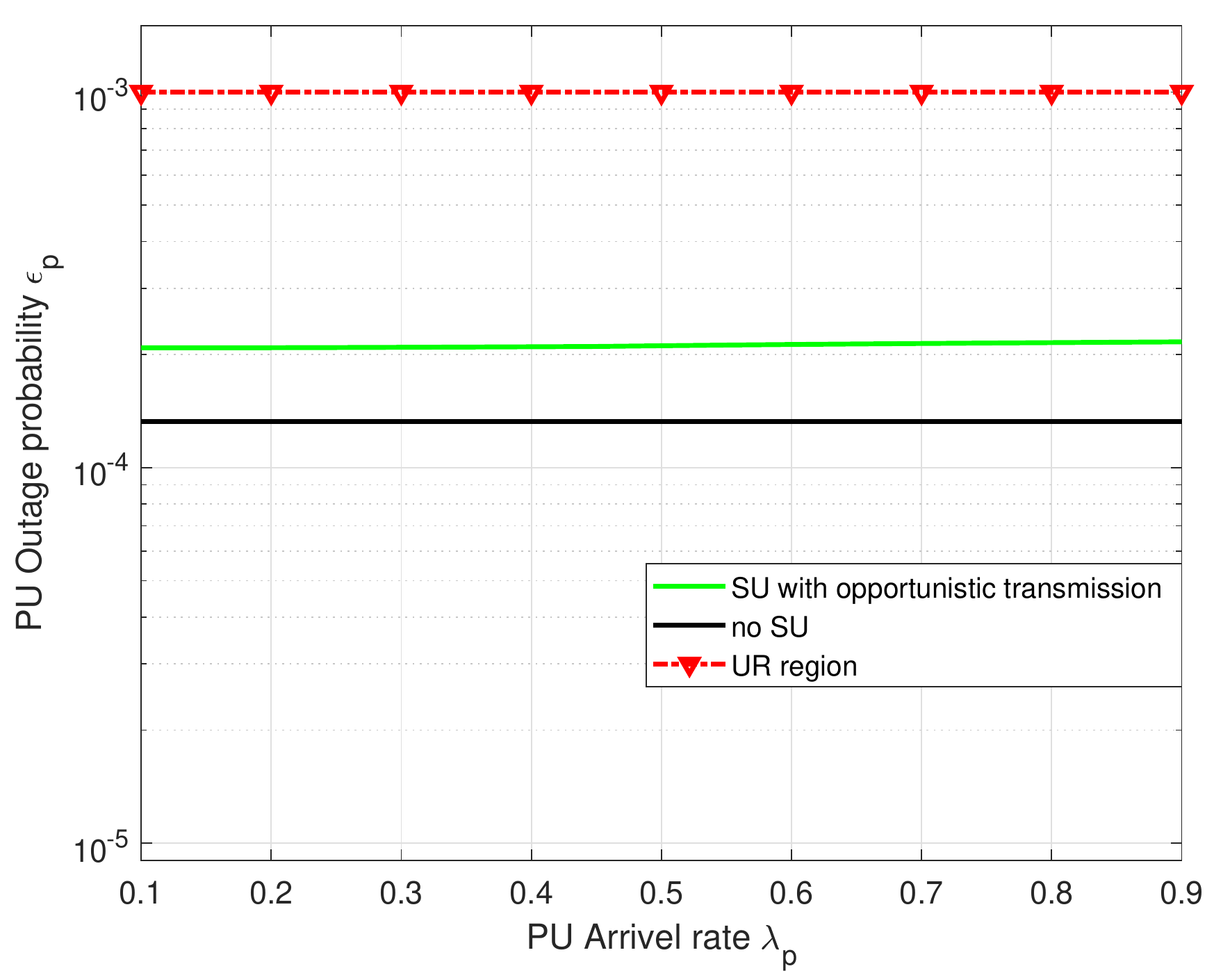}
	\vspace{-2mm}
	\caption{PU error probability $\epsilon_p$ vs PU arrival rate $\lambda_p$ for $\epsilon_{s_t}=\epsilon_{p_t}=10^{-3}$, $R=0.25$, $p_p=30 $ dB, $p_s=32 $ dB, $n=500$, and $M=3$.}
	\label{ep}
	\vspace{-2mm}
\end{figure}

\begin{figure}[!t] 
	\centering
	\includegraphics[width=0.93\columnwidth]{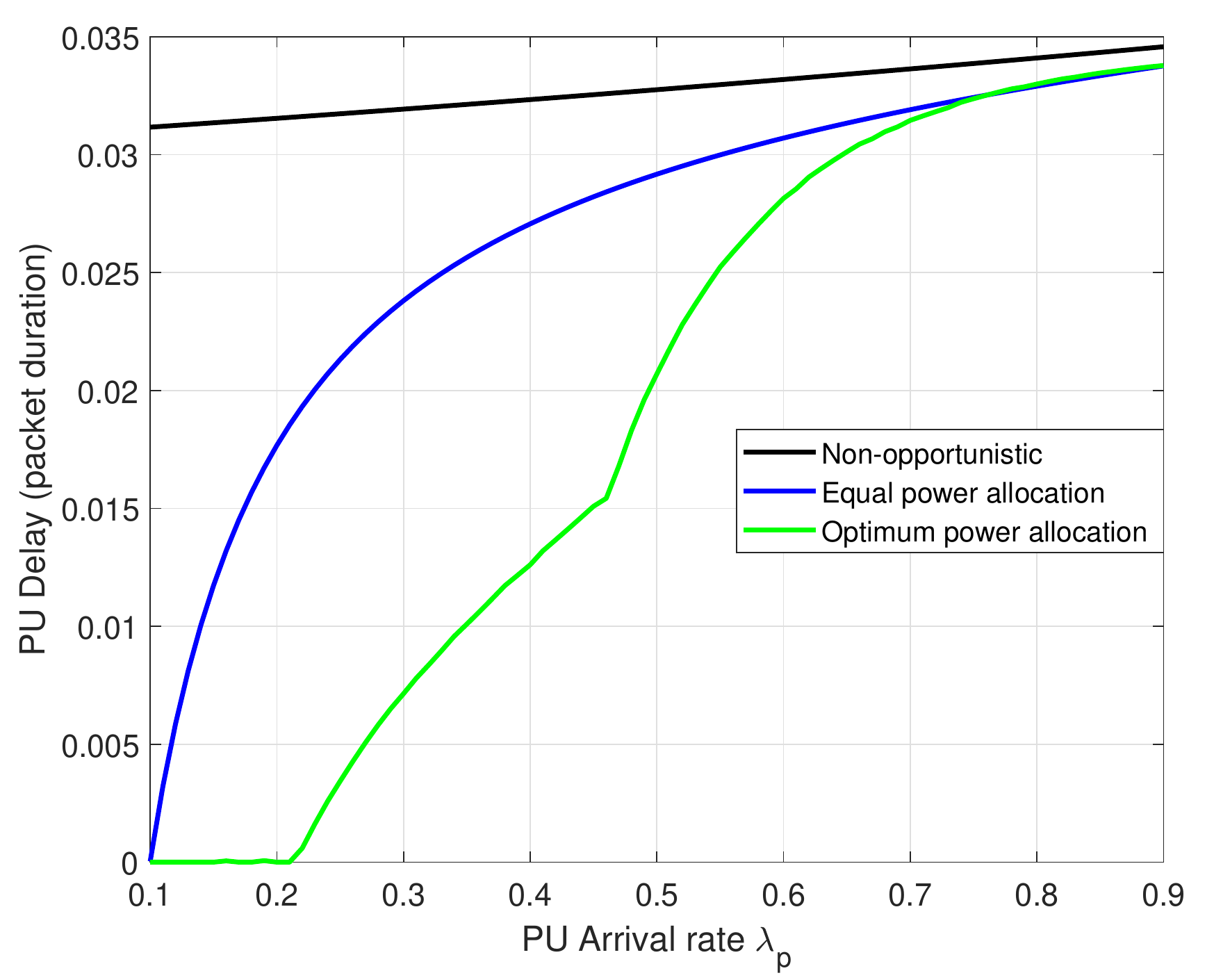}
	\vspace{-2mm}
	\caption{PU average delay $\epsilon_p$ vs PU arrival rate $\lambda_p$ for $\epsilon_{s_t}=\epsilon_{p_t}=10^{-3}$, $R=0.1$, $p_p=30 $ dB, $p_s=32 $ dB, $n=500$, and $M=3$.}
	\label{delay}
	\vspace{-2mm}
\end{figure}

Finally, we plot the PU average delay obtained from (\ref{e38}) as a function of PU arrival rate. The figure shows that the PU average delay for non-opportunistic scheme is higher than when applying the opportunistic transmission whether with equal or optimum power allocation. Unlike, the results in Fig. \ref{Power allocation}, the optimum power allocation strategy renders a considerable reduction in PU delay when compared to equal power allocation. For example, when $\lambda_p=0.5$, the delay is only 2 \% of packet duration. In all cases, the PU delay rises when the PU queue becomes more congested.

\section{Conclusion} \label{conclusion}
In this work, we introduced an efficient spectrum sharing model to achieve URC for short packets. The proposed model relies on ARQ opportunistic transmission of the SU after sensing the PU activity in an interweave scenario. We derived the SU outage expressions and defined the upper and lower bounds for this outage as a function of PU arrival rate. We characterized the optimum access probability and analyzed the average delay of the PU packets. The results showed that opportunistic transmission requires significantly lower power when compared to non-opportunistic and open loop scenarios. Moreover, the equal access probabilities highly approaches optimality in terms of power saving. However, we observe that applying the proposed opportunistic scheme with optimum power allocation strategy significantly reduces the PU delay. The proposed scheme serves the SU with a negligible sacrifice in the PU reliability.

\section*{Acknowledgments}
This work is partially supported by Aka Project SAFE (Grant no. 303532), and by Finnish Funding Agency for Technology and Innovation (Tekes), Bittium Wireless, Keysight Technologies Finland, Kyynel, MediaTek Wireless, and Nokia Solutions and Networks.

\bibliographystyle{IEEEtran}
\bibliography{di}
\end{document}